\newcommand{\ket}[1]{\mbox{$ | #1 \rangle $}}
\newcommand{\bra}[1]{\mbox{$ \langle #1 | $}}
\newcommand{\braket}[2]{\mbox{$ \langle #1 | #2 \rangle $}}
\newcommand{\rhoML}{\widehat{\rho}_{\textsc{m}}}
\newcommand{\tr}{\mathrm{tr}}
\newcommand{\cF}{{\cal F}}
\newcommand{\cL}{{\cal L}}
\newcommand{\cC}{{\cal C}}
\newcommand{\cQ}{{\cal Q}}
\newcommand{\cS}{{\mathcal S}}
\newcommand{\cG}{{\mathcal G}}
\newtheorem{theorem}{Theorem}
\newtheorem{statement}{Statement}
\newtheorem{proposition}{Proposition}
\begin{document}
\title{Convex Optimization over Classes of Multiparticle Entanglement}
\author{Jiangwei Shang}
\email{jiangwei.shang@bit.edu.cn}
\affiliation{Naturwissenschaftlich-Technische Fakult{\"a}t, Universit{\"a}t Siegen, Walter-Flex-Stra{\ss}e 3, 57068 Siegen, Germany}
\affiliation{Beijing Key Laboratory of Nanophotonics and Ultrafine Optoelectronic Systems, School of Physics, Beijing Institute of Technology, Beijing 100081, China}
\author{Otfried G{\"u}hne}
\email{otfried.guehne@uni-siegen.de}
\affiliation{Naturwissenschaftlich-Technische Fakult{\"a}t, Universit{\"a}t Siegen, Walter-Flex-Stra{\ss}e 3, 57068 Siegen, Germany}

\date{\today}

\begin{abstract}
A well-known strategy to characterize multiparticle entanglement utilizes
the notion of stochastic local operations and classical communication (SLOCC),
but characterizing the resulting entanglement classes is difficult. Given a
multiparticle quantum state, we first show that Gilbert's algorithm can be
adapted to prove separability or membership in a certain entanglement class.
We then present two algorithms for convex optimization over SLOCC classes.
The first algorithm uses a simple gradient approach,  while the other one
employs the accelerated projected-gradient method. For demonstration, the
algorithms are applied to the likelihood-ratio test using experimental data
on bound entanglement of a noisy four-photon Smolin state
[Phys. Rev. Lett. \textbf{105}, 130501 (2010)].
\end{abstract}

\pacs{03.65.Ta, 03.65.Ud}

\maketitle

\textit{Introduction.---}
Entanglement is a fundamental phenomenon in quantum mechanics and is
often considered to be a useful resource for tasks like quantum metrology
or quantum cryptography. Consequently, the question of whether a given quantum
state of two particles is entangled or separable is relevant for several
fields in physics \cite{ent1,ent2}. So far, much effort has been devoted to
devise methods to certify that a given state is entangled, prominent examples
are Bell inequalities and entanglement witnesses \cite{ent2, EW}. Contrary to
that, methods to prove separability, or, equivalently, to disprove the presence
of entanglement, are rare: For some instances explicit solutions
are known \cite{prl80.2245, pra58.826, jpa40.483, pra83.020303}, but in general
one has to rely on numerical procedures with a restricted applicability
\cite{pra76.032318, prl103.160404,
np6.943, pra86.032307}. In order to analyze experiments, one often
needs to quantify the extent to which the observed data can be explained by
separable states, e.g., in a likelihood-ratio test \cite{prl105.170501}. In this
case, one even needs to {\it optimize} over separable states and this is nearly impossible
with current analysis tools. The analysis of separability becomes even more complicated
in the multiparticle case, because then different classes of entanglement exist
\cite{slocc1,slocc2}.

In this Letter, we present methods to analyze separability for multiparticle quantum
states. First, we show that an adaption of the so-called Gilbert's algorithm \cite{Gilbert}
can be used to prove separability or membership in a certain entanglement class, and the
resulting algorithm outperforms known methods significantly. Second, we demonstrate that
a combination of this method with gradient methods can be used to perform optimization
over separability classes, allowing, for instance, the computation of likelihood ratios.
We demonstrate the practical usefulness of our approach with many examples and data
from recent experiments.


\textit{Notions of entanglement.---}
A pure state $\ket{\psi}$ of two particles is separable, if it is a product
state $\ket{\psi}={\ket{a}\otimes \ket{b}}$, otherwise it is entangled. Concerning
mixed states, a state is separable if it can be written as a convex combination
of product states, that is
\begin{equation}
\rho = \sum_i p_i \ket{a_i}\bra{a_i} \otimes \ket{b_i}\bra{b_i}\,.
\label{eq:convexsum}
\end{equation}
Here, the $p_i$ form a probability distribution, so they are positive and sum
up to one. A state that is not separable is called entangled. While the entanglement
of pure states is straightforward to characterize, the same question for mixed
states is a hard problem \cite{ent1,ent2}.

For more than two particles, different classes of entanglement exist. To give
a first example, let us consider ${n > 2}$ particles and fix a number ${2 \leq k \leq n}$.
A pure state is then $k$-separable if it can be written as a tensor product of
$k$ local states,
\begin{equation}
\ket{\Phi_{\textsc{S}_{k}}} = \bigotimes_{i=1}^{k}\ket{\phi_i}\,,
\end{equation}
where the $\ket{\phi_i}$ are $k$ states on subsets of the $n$ parties.
More specifically, the state are called biseparable for $k=2$, triseparable
for $k=3$, and up to fully separable for $k=n$. A state that is not $k$-separable
contains entanglement, for instance, it is genuinely multiparticle entangled
if it is not biseparable.
For mixed states, one can extend this definition as before via convex combinations.
In this case, one also considers mixtures of $k$-separable states that are separable
with respect to different partitions. It is known that the characterization
of mixed separable states is $NP$ hard, if the number of particles increases \cite{Gurvits2003, Gharibian2010}.

To characterize multiparticle entanglement further, a popular strategy uses
the notion of stochastic local operations and classical communication (SLOCC)
\cite{slocc1,slocc2}. In mathematical terms, a SLOCC operation can be
represented as $A_{\textsc{SLOCC}}=\bigotimes_{i}A_i$, where $A_i$ is
a matrix describing the local operation acting on the $i$th party. Under
SLOCC operations, a pure state $\ket{\phi}$ can be mapped to another state
$\ket{\phi'}$ iff
\begin{equation}
\label{eq:slocc}
\ket{\phi'}\propto A_{\textsc{SLOCC}}\,\ket{\phi}\,,
\end{equation}
and $\ket{\phi}$ and $\ket{\phi'}$ are called SLOCC equivalent if
the local operations $A_i$ are invertible \cite{slocc1}. Remarkably, for
three qubits this classification gives two inequivalent families of genuine
multiparticle states, the Greenberger-Horne-Zeilinger class and the $W$ class \cite{slocc1}. Again,
one can define the corresponding convex sets for mixed states as in
Eq.~(\ref{eq:convexsum}) \cite{slocc2}. We denote such a SLOCC
entanglement class by $\cC$.


\textit{Membership in SLOCC classes.---}
How can one determine whether a given quantum state is separable or belongs to
any specific SLOCC entanglement class $\cC$? In principle the algorithm
introduced in Refs.~\cite{np6.943, pra86.032307} is applicable, but no
convergence can be guaranteed, and the algorithm fails in general
for rank-deficient states.

Recently, Brierley, Navascu\'es, and V\'ertesi \cite{arXiv1609.05011} presented a scheme
for the problem of convex separation based on the so-called Gilbert's algorithm
\cite{Gilbert}. This scheme is shown to outperform existing linear programming
methods for certain large scale problems in quantum information theory. For
instance, nonlocality in bipartite scenarios can be certified with up to 42
measurement settings, new upper bounds are obtained for the visibility of
certain states, as well as the steerability limit of Werner states. Basically,
given any quantum state $\rho$, Gilbert's algorithm searches for a state
$\rho^{\cC}\in\cC$ which approximates the minimal distance between $\rho$
and the convex set $\cC$. We denote such an operation by applying Gilbert's
algorithm as ${\rho^{\cC}\equiv\cS(\rho)}$ for later use, see also
Appendix~A \cite{SM} for detailed discussions about this algorithm.

In any case, Gilbert's algorithm searches for an approximation of a given
state $\rho$ within the convex set $\cC.$ If a good approximation is found,
this does not mean that the state $\rho$ is within the set, as still it may be outside,
but close to the boundary. Nevertheless, using some facts from the entanglement theory,
we can modify the algorithm:

\begin{proposition}\label{prop1}
Gilbert's algorithm can be adapted to prove separability or membership in a certain SLOCC entanglement
class $\cC$.
\end{proposition}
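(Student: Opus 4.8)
The plan is to turn the approximation produced by Gilbert's algorithm into a rigorous membership certificate by exploiting the fact that $\cC$ has nonempty interior. The single entanglement-theoretic input I would use is the existence of a ball of some radius $r>0$, centered at the maximally mixed state $I/d$ (with $I$ the identity), that lies entirely inside $\cC$: for full separability this is the Gurvits--Barnum result, and for a generic SLOCC class $\cC$ one argues that $I/d$ is an interior point and estimates $r$ from below. Equivalently, for any traceless Hermitian direction $\Delta$ the boundary point $I/d + (r/\|\Delta\|)\,\Delta$ belongs to $\cC$. The difficulty, as noted before the statement, is that a small distance $\|\rho-\cS(\rho)\|$ alone does not imply $\rho\in\cC$, so the ball is precisely what lets one absorb the residual.

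First I would not apply $\cS$ to $\rho$ directly, but to a slightly de-mixed target. I fix a small $\beta>0$ for which $\tilde\rho := (1+\beta)\rho - \beta\,I/d \ge 0$ is still a valid state, and run Gilbert's algorithm on $\tilde\rho$ to obtain $\sigma=\cS(\tilde\rho)\in\cC$ together with the residual $\Delta=\tilde\rho-\sigma$ of norm $\delta=\|\Delta\|$. Setting $\omega = I/d + (r/\delta)\,\Delta \in \cC$, a short computation shows
\[ \rho = \frac{1}{1+\beta}\,\sigma + \frac{\delta}{(1+\beta)r}\,\omega + \frac{\beta - \delta/r}{1+\beta}\,\frac{I}{d}. \]
All three operators lie in $\cC$, the weights sum to one, and they are nonnegative precisely when $\delta\le\beta r$. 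Hence, whenever Gilbert's algorithm drives the residual below $\beta r$, this explicit convex decomposition proves $\rho\in\cC$.

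The stopping criterion $\delta\le\beta r$ is what makes the scheme a genuine proof rather than a numerical impression: since $\cS$ lowers the distance to $\cC$ monotonically and, for an interior state, all the way to zero, the de-mixed $\tilde\rho$ is still approximated arbitrarily well and the criterion is eventually met. I expect the main obstacle to be quantitative control of $r$. For separability the Gurvits--Barnum bound supplies a usable radius, but for a general SLOCC class $\cC$ one must either prove that $I/d$ is interior and bound the inscribed ball from below, or relocate the ball to a better interior point via convexity; a small $r$ forces a correspondingly small $\beta$ and hence a very accurate, slow Gilbert run. A secondary and unavoidable limitation is that only interior states can be certified this way, since for $\rho$ on the boundary of $\cC$ no positive $\beta$ closes the argument. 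The complementary task of certifying $\rho\notin\cC$ I would instead handle by reading off a separating hyperplane, i.e.\ an entanglement witness, from the residual direction that Gilbert's algorithm returns when the distance stays bounded away from zero.
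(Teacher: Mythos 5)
Your proposal is correct and is essentially the paper's own proof (Appendix~B): your de-mixed state $\tilde\rho=(1+\beta)\rho-\beta\,\openone/d$ is exactly the paper's $\rho_t$, the inscribed Gurvits--Barnum-type ball plays the same role, and your certification condition $\delta\le\beta r$ is equivalent to the paper's geometric criterion $r_t=\delta/\epsilon\le r$. Your three-term convex combination is the paper's two-term one, $\rho=\tfrac{1}{1+\beta}\,\sigma+\tfrac{\beta}{1+\beta}\,\rho_x$ with $\rho_x=\openone/d+\Delta/\beta$, after further splitting $\rho_x$ into a mixture of the ball-boundary state $\omega$ and $\openone/d$.
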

\begin{proof}
{The proof relies on two facts about $\cC$: (i) convexity, and (ii) highly mixed states
belong to $\cC$. If the state $\rho$ to be checked can be written as a convex combination
of the state found by Gilbert's algorithm and any state within the highly mixed region,
then ${\rho\in\cC}$. See Appendix~B \cite{SM} for the complete proof.}
\end{proof}

By making use of Proposition~\ref{prop1}, we tested different types of
entanglement for various multiparticle quantum systems in
Appendix~C \cite{SM}. It clearly shows that in most of the cases,
Proposition~1 gives much better results compared to those obtained
from previous known methods. Whereas, in the following, we use
Gilbert's algorithm as a tool to ensure constraints.

\textit{Convex optimization.---}
Denote by $\cF(\rho)$ a strictly concave
(or convex) function defined over the quantum state space $\cQ$, such
that $\cF(\rho)$ has a single maximum (or minimum). Many functions in
quantum information science meet this requirement, for instance the
log-likelihood function, the von Neumann entropy, etc. The statistical
operator $\rho\in\cQ$ has to satisfy two constraints, namely,
\begin{equation}
\rho\ge 0\,\,\textrm{and}\,\,\tr(\rho)=1\,.\label{eq:constr}
\end{equation}
We also assume that $\cF(\rho)$ is differentiable (except perhaps at a few
isolated points) with gradient $\nabla{\cF}(\rho)\equiv G(\rho)$. The
objective is to maximize $\cF(\rho)$ over a specific SLOCC entanglement
class $\cC\subseteq\cQ$, i.e., a convex subset of the state space.
Explicitly, we have
\begin{subequations}\label{eq:prob}
\begin{eqnarray}
\textrm{maximize}&\quad&\,\,{\cF}(\rho)\,,\\
\textrm{subject to}&\quad&\,\,\rho\in\cC\,.\label{eq:constrS}
\end{eqnarray}
\end{subequations}
We denote the solution of this optimization  by $\rhoML^{\cC}$.

As mentioned, it is hard to test whether a given state belongs to
a SLOCC entanglement class, which makes the optimization
defined above even harder. For this problem, we offer two iterative schemes,
where the constraint in Eq.~\eqref{eq:constrS} is guaranteed
by Gilbert's algorithm. Specifically, each iterative step
involves two operations, namely one gradient operation (the update)
followed by one Gilbert's operation (constraints enforced). For
the gradient, we have two different
approaches.

\textit{The direct-gradient (DG) scheme.---}
Let us first consider the case
when ${\cC=\cQ}$, then the constraint in Eq.~\eqref{eq:constrS} is identical
to that of Eq.~\eqref{eq:constr}, which can be ensured if one writes $\rho=A^{\dag}A/\tr(A^{\dag}A)$. In the unconstrained $A$ space, the small variation of $\cF(\rho)$ is given by
\begin{equation}
\delta{\cF}(\rho)\equiv\delta{\cF}(A)=\tr\Big(\delta A\frac{\bigl[G-\tr(G\rho)\bigr]A^{\dag}}{\tr(A^{\dag}A)}+H.c.\Big),
\end{equation}
to linear order in $\delta A$. If we choose $\delta A=\epsilon A\bigl[G-\tr(G\rho)\bigr]$ with $\epsilon$ being positive, then $\delta{\cF}(\rho)$ is always positive, hence walking upwards. Thus, by following the gradient, we have the update for $\rho$ in the DG scheme as
\begin{eqnarray}
\rho_{k+1}&=&\frac1{\cal N}\bigl(\openone+\epsilon\bigl[G_k-\tr(G_k\rho)\bigr]\bigr)\rho_{k}\bigl(\openone+\epsilon\bigl[G_k-\tr(G_k\rho)\bigr]\bigr)\,,\nonumber\\
&\equiv&\mbox{DG}(\rho_k,G_k,\epsilon)\,.\label{eq:dg}
\end{eqnarray}
with ${\cal N}$ being the normalization constant.

Once the iteration is finished, the algorithm returns the optimal quantum state $\rhoML$ with the corresponding optimal function value ${{\cF}_{\textsc m}={\cF}(\rhoML)}$ over the whole quantum state space. When $\cC$ is strictly smaller than $\cQ$, the state after the update in Eq.~\eqref{eq:dg} may easily be outside of $\cC$. Whenever this happens, we use Gilbert's algorithm to project $\rho_k$ back to $\cC$, i.e., $\rho_k\rightarrow\rho_k^{\cC}=\cS(\rho_k)$. Note that we also assume $\rhoML\notin\cC$, otherwise $\rhoML^{\cC}\equiv\rhoML$ then the optimization in Eq.~\eqref{eq:prob} is solved. With all the ingredients at hand, the DG algorithm proceeds as follows:
\begin{algorithm}[H]
\caption{{\bf DG}}
\begin{algorithmic}
\vspace*{0.1cm}
\State Given $\epsilon>0$ and $0<\beta<1$.
\State Choose any $\rho_0^{\cC}\in\cC$, and ${\cF}_{0}={\cF}(\rho_0^{\cC})$.
\vspace*{0.1cm}
\For {$k = 1,\cdots,$}
\vspace*{0.1cm}
\State Update $\rho_{k}=\mbox{DG}\!\left[\rho_{k-1}^{\cC},G\!\left(\rho_{k-1}^{\cC}\right)\!,\epsilon\right]$.
\State Calculate $\rho_{k}^{\cC}=\cS(\rho_{k})$, and ${\cF}_{k}={\cF}(\rho_k^{\cC})$.
\vspace*{0.1cm}
\State Termination criterion!
\vspace*{0.1cm}
\If {${\cF}_{k}<{\cF}_{k-1}$} \hspace*{0.1cm} (No update)
\State Reset $\epsilon=\beta\epsilon$ and $\rho_k^{\cC}=\rho_{k-1}^{\cC}$.
\EndIf
\vspace*{0.1cm}
\EndFor
\end{algorithmic}
\end{algorithm}
The initial step-size $\epsilon$ can be chosen rather arbitrarily, which does not affect the final output
too much. Though the DG algorithm is very simple and straightforward to use,
it suffers two problems: slow convergence and low precision. These are due
to the fact that the iterations in DG are actually performed in the
unconstrained $A$ space. When $\rho$ is close to
the boundary of the state space, it eventually becomes rank-deficient
with at least one small eigenvalue. The highly-asymmetric spectrum would
cause the gradient to be locally ill-defined \cite{apg}. To avoid
these problems the state-of-the-art optimization method is to walk
directly in the $\rho$ space.

\textit{The accelerated projected-gradient (APG) scheme.---}
The APG approach \cite{FISTA, AdaptRes, TFOCS} is generally applicable in all kinds of constrained problems where the constrains are enforced by a projection operation \cite{Goldstein, Levitin, Bruck, Passty, Nesterov}. In the current scenario, we have to make sure that the update for $\rho$ at each iterative step stays in $\cC$ all the time, for which we use Gilbert's algorithm. Rather different from common gradient approaches, update of the target $\rho$ in APG is based on another state $\sigma$, such that each update gets some ``momentum" from the previous step in order to find the optimal direction for update. The momentum is controlled by $\theta$ in the algorithm, which will be reset to 1 whenever it causes the current step to point too far from the DG direction. Upon convergence, $\rho$ and $\sigma$ will eventually merge to the same point. For more technical details about the APG algorithm, e.g., the `Restart' and `Accelerate' operations, we refer to Ref.~\cite{apg}. Then the APG algorithm proceeds
as follows:
\newpage
\begin{algorithm}[H]
\caption{{\bf APG}}
\begin{algorithmic}
\vspace*{0.1cm}
\State Given $\epsilon>0$ and $0<\beta<1$.
\State Choose any $\rho_0^{\cC}\in\cC$, and ${\cF}_{0}={\cF}(\rho_0^{\cC})$;
\State \hspace*{1.74cm}$\sigma_0=\rho_0^{\cC}$, and $\theta_0=1$.
\vspace*{0.1cm}
\For {$k = 1,\cdots,$}
\vspace*{0.1cm}
\State Update $\rho_k^{\cC}=\cS[\sigma_{k-1}+\epsilon\,G(\sigma_{k-1})]$, ${\cF}_{k}={\cF}(\rho_k^{\cC})$.
\vspace*{0.1cm}
\State Termination criterion!
\vspace*{0.1cm}
\If {${\cF}_{k}<{\cF}_{k-1}$} \hspace*{0.1cm} (Restart)
\State Reset $\epsilon=\beta\epsilon$, $\rho_k^{\cC}=\rho_{k-1}^{\cC}$, $\sigma_k=\rho_k^{\cC}$, and $\theta_k=1$.
\Else \hspace*{0.1cm}(Accelerate)
\State Set $\theta_k=\tfrac{1}{2}{\left(1+\sqrt{1+4\theta_{k-1}^2}\right)}$;
\State Update $\sigma_k=\rho_k^{\cC}+\frac{\theta_{k-1}-1}{\theta_k}\left(\rho_k^{\cC}-\rho_{k-1}^{\cC}\right)$.
\EndIf
\vspace*{0.1cm}
\EndFor
\end{algorithmic}
\end{algorithm}

The \textsc{matlab} codes for the DG and APG algorithms, with accompanying documentation and implementations, are available online \cite{code}.
To guarantee the validity of these two algorithms, we have the following theorem.
\begin{theorem}\label{theo1}
Suppose Gilbert's algorithm is precise, i.e., the operation ${\rho^{\cC}\equiv\cS(\rho)}$
always returns the closest $\rho^{\cC}$  with respect to $\rho$ in the Hilbert-Schmidt norm.
Then, if the iteration reaches a fixed point by the DG algorithm or the APG algorithm,
this point is the solution to the optimization in Eq.~\eqref{eq:prob}.
\end{theorem}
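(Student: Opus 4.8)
The plan is to show that a fixed point of either iteration satisfies the first-order optimality condition for the constrained problem~\eqref{eq:prob}, and then to promote this to global optimality using strict concavity. Since $\cF$ is strictly concave and $\cC$ is a compact convex subset of $\cQ$, the maximizer $\rhoML^{\cC}$ exists, is unique, and is characterized by the variational inequality $\tr\!\bigl[G(\rhoML^{\cC})(\sigma-\rhoML^{\cC})\bigr]\le 0$ for all $\sigma\in\cC$; by concavity this is not only necessary but also sufficient for a feasible point to be the global maximizer. The second ingredient is the assumed exactness of Gilbert's algorithm: $\cS(\cdot)$ is then the Hilbert--Schmidt projection onto $\cC$, so by the projection theorem $\rho^\star=\cS(y)$ holds iff $\tr\!\bigl[(y-\rho^\star)(\sigma-\rho^\star)\bigr]\le 0$ for every $\sigma\in\cC$.

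For the APG scheme the argument is then immediate. At a fixed point the auxiliary state and the target merge, $\sigma_{k-1}=\rho_{k}^{\cC}=\rho^\star$, so the update reads $\rho^\star=\cS\!\bigl[\rho^\star+\epsilon\,G(\rho^\star)\bigr]$. Inserting $y=\rho^\star+\epsilon\,G(\rho^\star)$ into the projection theorem gives $\epsilon\,\tr\!\bigl[G(\rho^\star)(\sigma-\rho^\star)\bigr]\le 0$ for all $\sigma\in\cC$, and dividing by $\epsilon>0$ recovers exactly the variational inequality above. Hence $\rho^\star=\rhoML^{\cC}$ solves~\eqref{eq:prob}.

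For the DG scheme I would first linearize the multiplicative update. Writing $g=\tr(G\rho^\star)$ and $K=\openone+\epsilon[G-g\openone]$, one checks that $\tr(K\rho^\star K)=1+O(\epsilon^2)$, so to leading order $\mbox{DG}(\rho^\star,G,\epsilon)=\rho^\star+\epsilon M+O(\epsilon^2)$ with the traceless Hermitian direction $M=G\rho^\star+\rho^\star G-2g\rho^\star$. The fixed-point relation $\cS[\rho^\star+\epsilon M]=\rho^\star$ together with the projection theorem then yields $\tr[M(\sigma-\rho^\star)]\le 0$ for all $\sigma\in\cC$, i.e.\ $M$ lies in the outward normal cone of $\cC$ at $\rho^\star$.

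The main obstacle is this last step for DG: unlike APG, the search direction is not the plain gradient $G$ but the \emph{preconditioned} (natural-gradient) direction $M$ induced by the congruence parametrization $\rho=A^{\dagger}A/\tr(A^{\dagger}A)$, so the normal-cone condition on $M$ does not match the variational inequality on $G$ term by term. My plan to close this gap is to run the stationarity argument directly in the unconstrained $A$-space, where the DG step is ordinary gradient ascent: a fixed point is a stationary point of $A\mapsto\cF(A^{\dagger}A/\tr(A^{\dagger}A))$ on the feasible set, and because congruence by $\rho^\star$ is a positive map it carries feasible ascent directions to feasible ascent directions, so no $\cF$-increasing feasible direction can survive. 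Strict concavity then upgrades this stationarity to the global optimum $\rhoML^{\cC}$. The delicate case to treat carefully is the rank-deficient boundary, where the $A$-parametrization degenerates and the gradient is locally ill-defined (the very effect noted after Eq.~\eqref{eq:dg}); there I would proceed by a continuity/limiting argument, or equivalently restrict the variational inequality to the support of $\rho^\star$.
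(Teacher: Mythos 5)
Your APG argument is correct and complete, and it rests on the same geometric fact as the paper's Appendix~D: the paper's observation that the ball $B_{\epsilon}[\cG(\rho^{*})]$ contains only one state of $\cC$, namely $\rho^{*}$, is precisely the projection variational inequality you invoke. The organization differs: the paper proves uniqueness of the fixed point (Statement~1) and that the maximizer is a fixed point (Statement~2), then combines the two (with Statement~3 reducing APG to Statement~1 via the Restart rule), while you go directly from the fixed-point relation $\rho^{\star}=\cS[\rho^{\star}+\epsilon\,G(\rho^{\star})]$ to the first-order condition $\tr[G(\rho^{\star})(\sigma-\rho^{\star})]\le 0$ for all $\sigma\in\cC$, and then to global optimality via concavity. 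Your route is the standard projected-gradient argument and, for this half, is tighter than the paper's.

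The DG half is where your submission falls short: what you give there is a plan, not a proof, and the plan cannot be completed as described. You are right that the obstacle is real---the DG displacement is $M=(G-g)\rho^{\star}+\rho^{\star}(G-g)$ with $g=\tr(G\rho^{\star})$, not $G$ itself; notably, the paper's own proof silently elides this when it passes from $\braket{\cG(\rho^{*})-\rho^{*}}{\vec{l}}\le 0$ to $\partial_{t}\cF|_{t=0}\le 0$. But your proposed repair fails on both counts. First, the projection $\cS$ acts in $\rho$-space with respect to the Hilbert-Schmidt norm, so a fixed point of $\cS\circ\mathrm{DG}$ is not a stationary point of the constrained problem in $A$-space; the phrase ``congruence carries feasible ascent directions to feasible ascent directions'' is not a substitute for relating the normal-cone condition on $M$ to the one on $G$. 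Second, the rank-deficient case is not merely delicate---it is where the claim actually breaks. Since $M$ vanishes identically whenever $(G-g)\rho^{\star}=0$, any pure state $\ket{\psi}\bra{\psi}\in\cC$ whose vector $\ket{\psi}$ is an eigenvector of its own gradient satisfies $\mathrm{DG}(\rho^{\star})=\rho^{\star}$ exactly, and is therefore a fixed point of $\cS\circ\mathrm{DG}$ whether or not it maximizes $\cF$ over $\cC$. Concretely, take $\cF(\rho)=-\tr[(\rho-\tau)^{2}]$, which is strictly concave, with $\tau\in\cC$ full rank and diagonal in a product basis, and let $\ket{\psi}$ be a product eigenvector of $\tau$: then $\ket{\psi}\bra{\psi}\in\cC$ is an exact DG fixed point, while the unique maximizer is $\tau$. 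No continuity or limiting argument will remove such fixed points; closing the DG half requires an additional hypothesis (for instance, that the fixed point is full rank, or weakening the conclusion to optimality restricted to the support of $\rho^{\star}$), which neither your proposal nor, for that matter, the paper's Appendix~D provides.
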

\begin{proof}
{We prove this theorem by assuming contradictions: the convexity properties of $\cF(\rho)$ and
$\cC$ are not compatible with two different solutions. For the complete proof, see Appendix~D \cite{SM}.}
\end{proof}

\textit{The likelihood-ratio test.---}
In real-world experiments, resources are limited, thus the data obtained are always finite. Drawing conclusions
from a finite amount of data requires statistical reasoning. In Ref.~\cite{prl105.170501}, a universal method for quantifying the weight of evidence for (or against) entanglement with finite data was introduced. However, being boiled down to an optimization over specific convex sets of entanglement classes, this method is generally not doable. Here, we show that this problem can be tackled by using our algorithms.

In a typical quantum tomographic scenario \cite{LNP649}, $N$ independently and identically prepared copies of the quantum state $\rho$ are measured by a positive operator-valued measure (POVM) $\{\Pi_k\}_{k=1}^K$, with ${\Pi_k\ge 0}\,\,\forall k$ and ${\sum_{k=1}^K\Pi_k=\openone}$. The data ${D=\{n_1,n_1,\dots,n_K\}}$ consist of a sequence of detector clicks, the probability of getting which is given by the likelihood function
\begin{equation}
\cL(D|\rho)=\prod_{k} p_{k}^{n_k}={\left\{\prod_{k}{\left[\tr(\rho\Pi_k)\right]}^{f_k}\right\}}^N\,,
\end{equation}
where ${p_k=\tr(\rho\Pi_k)}$ (Born rule) is the probability for outcome $\Pi_k$, and $f_k=n_k/N$ denotes the relative frequency. Note that $\cL(D|\rho)$ is not strictly concave, but the normalized log-likelihood $\cF(\rho)\equiv\frac{1}{N}\ln\cL(D|\rho)$ is.

The likelihood ratio in Ref.~\cite{prl105.170501} is defined as
\begin{equation}\label{eq:lratio}
  \Lambda\equiv\frac{\max_{\rho\in \cC} {\cL}(D|\rho)}{\max_{\mbox{\scriptsize all}\,\rho}{\cL}(D|\rho)}\,
\mbox{ and }
  \lambda=-2\ln(\Lambda)
\end{equation}
represents the weight of evidence in favor of entanglement. Hence, to demonstrate entanglement convincingly, a large value of $\lambda$ is demanded. Moreover, for states lying close to the boundary of $\cC$, it has been shown in \cite{prl105.170501} that $\lambda$ follows a semi-$\chi_1^2$ distribution for large enough $N$. By having this, one can perform hypothesis testing to demonstrate entanglement, then construct confidence levels. Suppose we get $\rho_{\scriptsize\mbox{exp}}$ with the corresponding $\lambda_{\scriptsize\mbox{exp}}$ in an experiment, the $p$-value for the null hypothesis that ${\rho_{\scriptsize\mbox{exp}}\in\cC}$ is given by the probability ${\mbox{Pr}(\lambda>\lambda_{\scriptsize\mbox{exp}})\equiv\epsilon}$. Therefore, with the $(1-\epsilon)$ confidence level, the null hypothesis has to be rejected, indicating that the state is entangled.

The likelihood ratio defined in Eq.~\eqref{eq:lratio} involves two optimizations over two different convex sets. The maximization in the denominator is well known as the maximum-likelihood estimation \cite{MLEreview}, for which various algorithms exist; while the maximization in the numerator fits exactly into our problem. Let us denote the solutions to these two maximizations by ${\rhoML=:\arg \max_{\mbox{\scriptsize all}\,\rho} {\cF}(\rho)}$ and ${\rhoML^{\cC}=:\arg \max_{\rho\in \cC} {\cF}(\rho)}$ respectively, then Eq.~\eqref{eq:lratio} can be rewritten as
$\lambda=2N\!\big[\cF(\rhoML)-\cF(\rhoML^{\cC})\big].$

\begin{figure}[]
\includegraphics[width=0.8\columnwidth]{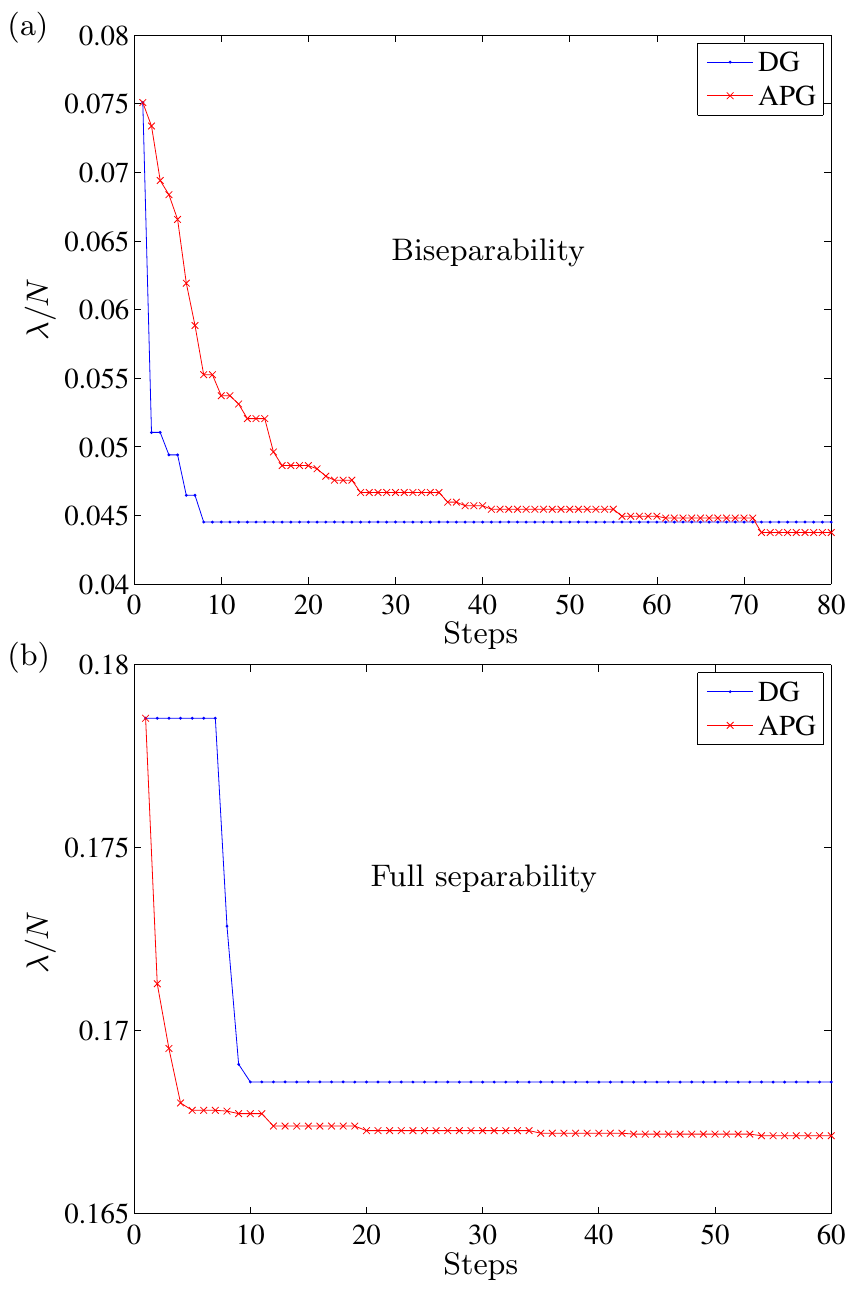}
\caption{\label{fig:W4}
The normalized log-likelihood ratios $\lambda/N$ at each iterative step by DG and APG respectively, for the four-qubit $W$ state with white noise: (a) Biseparability; (b) Full separability.
The plateaus in the plots imply the process where the algorithms are searching for suitable step-sizes for the next update. As shown, the APG algorithm usually returns more accurate solutions than DG does.}
\end{figure}
\textit{Four-qubit $W$ state with white noise.---}
For the first application, let us consider the four-qubit $W$ state,
$\ket{W_4}=(\ket{0001}+\ket{0010}+\ket{0100}+\ket{1000})/2$
mixed with white noise,
\begin{equation}\label{eq:rhop}
\rho_{W_4}(q)=q\ket{W_4}\bra{W_4}+\frac{1-q}{16}\openone\,.
\end{equation}
By employing Proposition~1, we find $\rho_{W_4}(q)$ is biseparable for ${q\le 0.4555}$ and fully separable for ${q\le0.09}$; see Table~I in Appendix~C \cite{SM}.

In the simulation, we choose the noise level ${q=0.9}$, then employ the standard Pauli tomographic scheme where each qubit is measured in the basis of the three Pauli operators. Without the loss of generality, we set ${\{f_k=p_k\}}$ such that the maximum-likelihood estimator \emph{is} the true state. Hereafter, we calculate instead the normalized log-likelihood ratios, i.e., $\lambda/N$. Figure~\ref{fig:W4} shows the results for testing biseparability as well as full separability for this case. As expected, the $\lambda/N$ value obtained for biseparability is much smaller than that for full separability, as the fully separable states consist of a strictly smaller subset of the biseparable region. Moreover, the APG algorithm usually has a better precision-resolvent capability than DG does. For more simulated examples, see Appendix~E \cite{SM}.

\textit{Experimental bound entanglement.---}
The four-party Smolin state \cite{Smolin} is
\begin{equation}
\rho_{\mathrm S}=\frac1{4}\sum_{\mu=1}^{4}\ket{\Psi^{\mu}}\bra{\Psi^{\mu}}_{AB}\otimes\ket{\Psi^{\mu}}\bra{\Psi^{\mu}}_{CD}\,,
\end{equation}
where the subscripts label the parties and $\ket{\Psi^{\mu}}$ are the two-qubit Bell states. By adding white noise, we have
$
\rho_{\mathrm S}(q)=q\rho_{\mathrm S}+({1-q})\openone/16,
$
which is fully separable for ${q\le 1/3}$, and bound entangled for ${q>1/3}$ \cite{pra74.010305}.
In Ref.~\cite{PhotonSmo}, a family of noisy four-photon Smolin states was generated by spontaneous parametric down-conversion. By varying the noise level, bound entanglement was successfully demonstrated for ${q=0.51}$. Here, we re-analyze their experimental data using the likelihood-ratio test.

To demonstrate bound entanglement, one has to show that the state has a positive partial transpose (PPT) \cite{PPT}, but is nevertheless entangled. For this, optimizations over two different convex sets, namely, sets of the fully separable states as well as the PPT states, have to be performed; see the results in Fig.~\ref{fig:exp}. At noise level $q=0.51$, we get ${\lambda\approx2.42\times10^{3}}$ with the {$p$-value $\approx0$} for the null hypothesis that the state is separable. Thus, the null hypothesis has to be rejected, so the state is indeed entangled. Meanwhile, we get ${\lambda\approx1.94\times10^{-6}}$ for the optimization over PPT states, indicating strongly that the state is PPT \cite{note1}. Therefore, the state at noise level $q=0.51$ is both entangled and PPT, thus bound entangled. Similarly, one can conclude from the $\lambda$ values that the state at noise level $q=0$ is separable, while the state at $q=1$ is genuinely entangled.

In Appendix~F \cite{SM}, we use simulated data to perform the likelihood-ratio test for various noise levels. By doing so, we identify the parameter range ${q\sim [0.35, 0.8]}$ (containing ${q=0.51}$ from the real experiment), which is most likely to show bound entanglement.
\begin{figure}[]
\includegraphics[width=0.8\columnwidth]{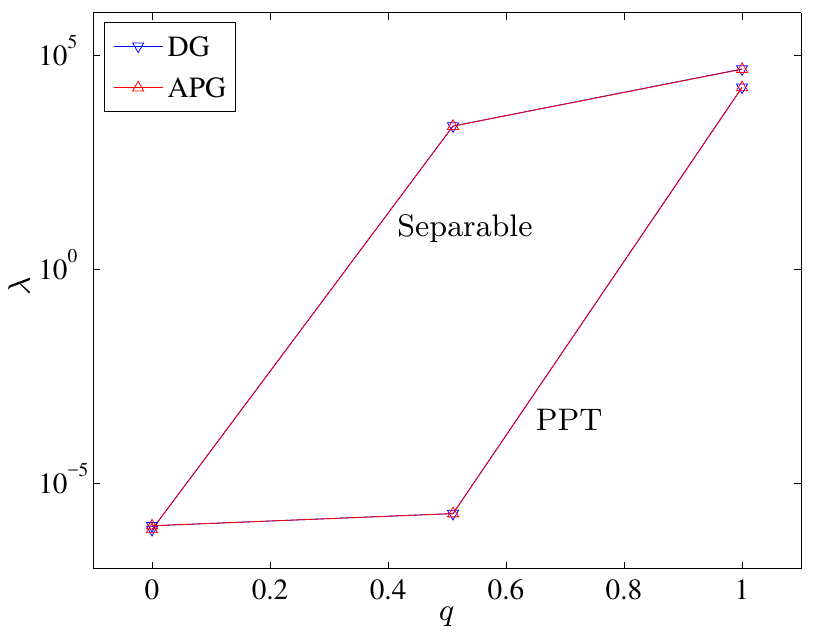}
\caption{\label{fig:exp}
Likelihood-ratio test for bound entanglement for the experimental four-photon Smolin state \cite{PhotonSmo}. The top curve labeled ``Separable" indicates the $\lambda$ values obtained via maximizing over the fully separable states; while the bottom curve labeled ``PPT" gives values obtained by maximizing over the PPT states. Hypothesis testing suggests that the state at noise level $q=0.51$ is both entangled and PPT, thus bound entangled.
}
\end{figure}
%

\textit{Conclusions.---}
The characterization of multiparticle entanglement is generally hard. In this work, we show that Gilbert's algorithm can be adapted to prove a given quantum state is either separable or belongs to a SLOCC entanglement class, with the thresholds thus obtained being much better than those reported by previous known methods. Furthermore, with the help of Gilbert's algorithm, two reliable schemes are presented for the convex optimization over any defined SLOCC entanglement classes. For demonstration, we re-analyzed the experimental data on bound entanglement of the noisy four-photon Smolin states using the likelihood-ratio test. As such, we expect that our methods would become a reliable tool for experimentalists to test the entanglement property of their quantum systems with confidence.

%
This work has been supported by the ERC (Consolidator Grant No. 683107/TempoQ), and the DFG. We thank H. K. Ng, Z. Zhang, S. Brierley, T. V\'ertesi, and H. Zhu for stimulating discussions. We are also grateful to J. Lavoie for sharing the experimental data in Ref.~\cite{PhotonSmo}, to M. Kleinmann and T. Monz for sharing the data in Ref.~\cite{np6.943}, and to H. Kampermann for sharing the codes used in Ref.~\cite{pra86.032307}.

\section{Appendix A: Gilbert's algorithm}\label{sec:Gil}
The scheme presented in Ref.~\cite{arXiv1609.05011} is based on Gilbert's
algorithm for quadratic minimization \cite{Gilbert}. Here, we discuss this
scheme using the language in the present context, where we extend it to be applicable for any defined SLOCC entanglement classes. For more technical
details about Gilbert's algorithm we refer the reader to Refs.~\cite{arXiv1609.05011, Gilbert}.

The optimization problem that Ref.~\cite{arXiv1609.05011} solves is the so-called \emph{Weak minimum Distance (WDIST)}: Given any quantum state $\rho$, Gilbert's algorithm searches for a state $\rho^{\cC}\in\cC$, such that
\begin{equation}\label{eq:Gil}
||\rho-\rho^{\cC}||\le\mbox{dist}({\cC},\rho)+\delta\,,
\end{equation}
where $\mbox{dist}({\cC},\rho)$ denotes the minimal distance between $\rho$ and the convex set $\cC$ in Hilbert-Schmidt (HS) norm, and $\delta$ is a pre-defined tolerance. Specifically, Gilbert's algorithm with memory $m$ proceeds as follows:
\begin{algorithm}[H]
\caption{{\bf Gilbert with memory}}
\begin{algorithmic}
\vspace*{0.1cm}
\State Choose any $\rho_1^{\cC}\in\cC$.
\vspace*{0.1cm}
\For {$k = 1,\cdots,$}
\vspace*{0.1cm}
\State 1. Use an oracle to solve $\sigma_k=:\arg\max_{\sigma\in\cC}\,\bigl[(\rho-\rho_k^{\cC})\cdot\sigma\bigr]$.
\State $\quad\,$Append $\sigma_k$ into $A$ as a column. $\quad$ (Memory)
\vspace*{0.1cm}
\State 2. Solve $\bar{x}_{min}=:\arg\min_{\bar{x}\succ0,\sum\bar{x}=1}||A\bar{x}-\rho||$.
\State $\quad\,$Update $\rho_{k+1}^{\cC}\equiv A\bar{x}_{min}$.
\vspace*{0.1cm}
\State Termination criterion!
\vspace*{0.1cm}
\EndFor
\end{algorithmic}
\end{algorithm}
\begin{figure}[]
\includegraphics[width=0.8\columnwidth]{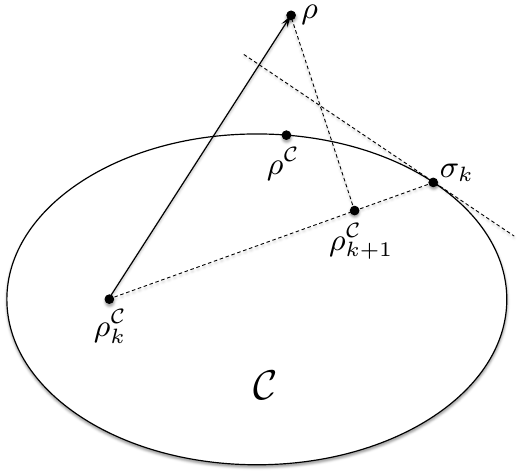}
\caption{\label{fig:Gil}%
Geometrical description of Gilbert's algorithm with memory $m=1$.
The state $\rho$ should be approximated and a state $\rho^\cC_k$
in $\cC$ is known. One first computes $\sigma_k$, and then defines
$\rho^\cC_{k+1}$ as the best approximation to $\rho$ on the line
connecting $\rho^\cC_k$ with $\sigma_k$. This procedure can be
iterated, see the text for details.
}
\end{figure}
Figure~\ref{fig:Gil} shows a geometrical description of Gilbert's algorithm when the memory ${m=1}$. More memory is better for convergence, but would cost more time for each iteration. To balance the trade-off, in practice we usually set
${m=50}$. For the maximization in step 1, we adopt a heuristic oracle as
that used in Refs.~\cite{np6.943, pra86.032307}. First, instead of considering
the whole convex set $\cC$,  it is sufficient to optimizate over pure states
only, such that
\begin{equation}
\sigma_k=:\arg\max_{|\Phi'\rangle\in\cC}\bra{\Phi'}(\rho-\rho_k^{\cC})\ket{\Phi'}\,.
\end{equation}
where ${\ket{\Phi'}\propto A_{\textsc{SLOCC}}\,\ket{\Phi_0}}$ with arbitrary initial ${\ket{\Phi_0}\in\cC}$. Then, one can perform this optimization
iteratively, where in each step $n-1$ of the single-particle transformations
$A_i$ are fixed, while the
remaining one can be determined analytically \cite{np6.943, pra86.032307}.
Note that a certified optimal solution is not needed in this step as long
as the returned $\sigma_k$ stays in $\cC$. The block matrix $A$ contains
$m$ entries, whereas extra ones (those earliest added) should be erased. The minimization in step 2 is equivalent to projecting $\rho$ onto the line ${\vec{l}=\sigma_k-\rho_k^{\cC}}$, which is a simple linear constraint problem thus can be solved easily. Since the projected $\rho_{k+1}^{\cC}$ is a convex combination of two states within $\cC$, the update in Gilbert's algorithm is guaranteed to stay in $\cC$. After a finite number of iterations, the good approximation ${\rho^{\cC}\in\cC}$, satisfying the inequality in Eq.~\eqref{eq:Gil} is returned.

\section{Appendix B: Proof of Proposition~1}\label{sec:prop1}
\begin{figure}[]
\includegraphics[width=0.75\columnwidth]{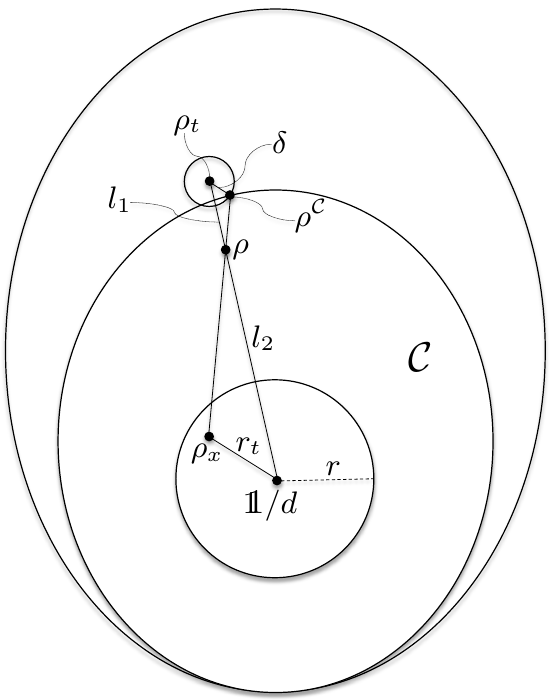}
\caption{\label{fig:prop1}%
Geometrical description of the proof for Proposition~1. See the text
for details.
}
\end{figure}
As mentioned in the main text, Gilbert's algorithm cannot be used directly to certify separability nor prove whether a given quantum state belongs to a SLOCC entanglement class or not. Let us look at Eq.~\eqref{eq:Gil} in the \emph{WDIST} definition. Suppose ${\rho\in\cC}$, we would have $\mbox{dist}({\cC},\rho)=0$, then $||\rho-\rho^{\cC}||\le\delta$. Thus, if $||\rho-\rho^{\cC}||>\delta$, one can conclude that $\rho\notin\cC$. However, ambiguity may happen if $\rho\notin\cC$, but lies very close to the boundary of $\cC$ (namely when ${\mbox{dist}({\cC},\rho)<\delta}$), then the result may be interpreted in the wrong way.

To prove Proposition~1, we need one fact about the convex set $\cC$, that is, highly mixed states belong to $\cC$. This, of course, depends on the structure of $\cC$. For example, consider bipartite $d_1\times d_2$ systems and let $\cC$ denote the set of separable states, it has been shown that if
\begin{equation}
\tr(\rho^2)\le\frac1{d_1d_2-1}\,,
\end{equation}
then $\rho$ is separable, i.e., $\rho\in\cC$ \cite{pra66.062311}. In terms of the HS norm, we have a finite region surrounding the completely mixed state $\openone/d$ with radius $r=1/\sqrt{d(d-1)}$ such that all the states contained are separable; see Fig.~\ref{fig:prop1}. Similar results have been obtained for other SLOCC entanglement classes \cite{np6.943}.

Given any multiparticle quantum state $\rho$, Proposition~1 says that Gilbert's algorithm can be adapted to prove $\rho$ is either separable or belongs to a SLOCC class $\cC$. Firstly, choose a small real positive value $\epsilon$, and construct the following state
\begin{equation}
\rho_t=(1+\epsilon)\rho-\frac{\epsilon}{d}\openone\,.
\end{equation}
Next, we run Gilbert's algorithm to find the closest state ${\rho^{\cC}\in\cC}$ with respect to $\rho_t$; see Fig.~\ref{fig:prop1}. We can connect
$\rho^{\cC}$ and $\rho$ then extrapolate to $\rho_x$,
where $\rho_x$ is defined by the condition that the two lines indicated by $r_t$
and $\delta$ are parallel. From a geometrical perspective, we have
\begin{equation}
  \frac{\delta}{r_t}=\frac{l_1}{l_2}\quad\Rightarrow\quad r_t=\frac{l_2}{l_1}\delta\,.
\end{equation}
Thus, if $r_t\le r$, then $\rho_x$ is in $\cC$, and so we certify that $\rho$ is separable or $\rho\in\cC$ since it is the convex combination of two states within $\cC$. Finally, if $r_t>r$, one can try to repeat the process by varying
$\epsilon$, e.g., decreasing it until $\epsilon < \delta$, i.e., the numerical tolerance that we set for Gilbert's algorithm.

\section{Appendix C: Applications of Proposition~1}\label{sec:thres}
\begin{figure}[]
\includegraphics[width=0.8\columnwidth]{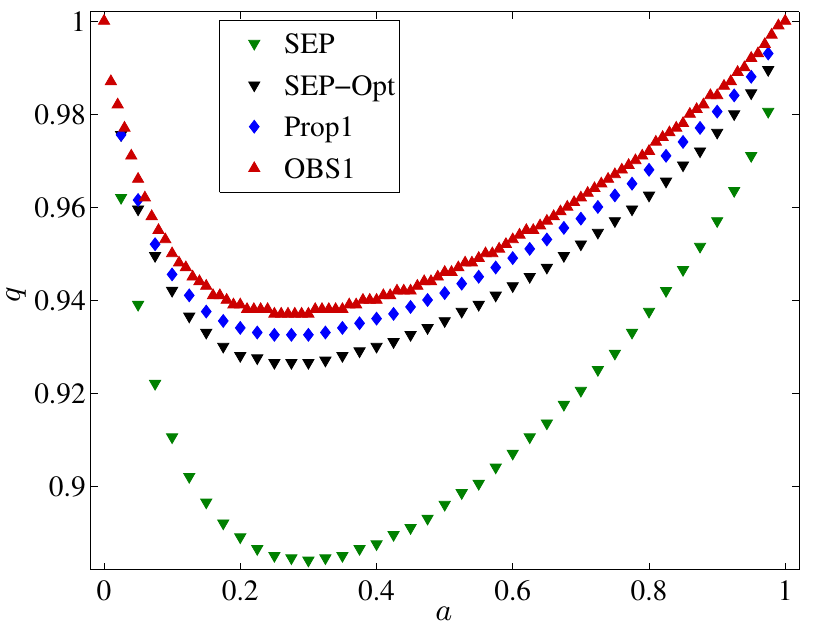}
\caption{\label{fig:sep_PH}%
Separability test for the Horodecki $3\times3$ bound entangled states mixed with white noise. ``SEP" denotes the algorithm introduced in Ref.~\cite{pra86.032307}, and ``OBS1" represents the entanglement criterion reported in Ref.~\cite{prl109.200503}. As can be seen, the values found by Proposition~1 (``Prop1") are significantly improved over those by ``SEP", and tend to reach the bound by ``OBS1". Note that the optimized values by ``SEP" (denoted by``SEP-Opt'') are still worse than the values obtained by Proposition~1.
}
\end{figure}
\setlength{\tabcolsep}{0.85em} 
\renewcommand{\arraystretch}{1.2}
\begin{table}[]
\begin{tabular}{lccll}
\hline\hline
State   &   Ent.   &   Bound   &   SEP \cite {pra86.032307}   &   Prop1  \\ \hline
$\rho_{\mathrm{GHZ} 3}$   &   S   &   $1/5^{a,b}$   &   $0.199$   &   $0.1986$ \\
        &   BS   &   $0.429^{a}$ \cite{prl106.190502}   &   $0.4285$   &   $0.4281$ \\
        &   W   &   $0.6955^{a}$ \cite{prl108.020502}   &   $0.694$   &   $0.6948$ \\ \hline
$\rho_{\mathrm{GHZ} 4}$   &   S   &   $1/9^{a,b}$   &   $0.111$   &   $0.1087$ \\
        &   BS   &   $0.467^{a}$ \cite{prl106.190502}   &   $0.466$   &   $0.4609$ \\
        &   W   &      &   $0.316$   &    $0.3512$ $\uparrow$  \\ \hline
$\rho_{\mathrm{W3}}$   &   S   &   $3/11^{b}$   &   $0.1727$   &   $0.177$  \\
        &   BS   &   $0.479^{a}$ \cite{prl106.190502}   &   $0.45$   &   $0.4745$ $\uparrow$  \\ \hline
$\rho_{\mathrm{W4}}$   &   S   &   $1/5^{b}$   &   $0.09$   &   $0.09$ \\
        &   BS   &   $0.474$ \cite{prl106.190502}   &   $0.434$   &   $0.4555$ $\uparrow$  \\ \hline
$\rho_{\mathrm{UPB}}$   &   S   &   $0.87$ \cite{prl88.187904}   &   $0.83$   &   $0.863$ $\,\,\,\uparrow$  \\ \hline
$\rho_{\mathrm{BE3}}$   & S   &   $0.786^{a}$ \cite{HyllusThs}   &   $0.726$   &   $0.732$  \\
        &   BS   &   $1^{a}$ \cite{slocc2}   &   $0.9$   &   $0.9985$ $\uparrow$  \\
\hline\hline
\multicolumn{5}{l}{$^a$Exact values from the literature.}  \\
\multicolumn{5}{l}{$^b$Bounds obtained via the PPT criterion.}
\end{tabular}
\caption{\label{tab:thres}
Threshold values $q$ for different types of entanglement for various multiparticle quantum states $\rho$ mixed with white noise, i.e., $\rho(q)=q\rho+(1-q)\openone/d$. The column ``SEP" contains the values reported by the algorithm in Ref.~\cite{pra86.032307}, and the last column (``Prop1") denotes the corresponding values obtained by Proposition~1. As can be seen, most of the values are improved by Proposition~1, but few are not. We also mark out the values being improved significantly by Proposition~1 with an ``$\uparrow$''.
}
\end{table}
In this section, we make use of Proposition~1 to test different types of entanglement for various multiparticle quantum states.

For the first example, consider the family of $3\times3$ bound entangled states introduced by P. Horodecki \cite{pla232.333},
\begin{equation}\label{eq:rhoPH}
\rho_{\mathrm{PH}}^{a}=\frac1{8a+1}
\left(
\begin{array}{ccccccccc}
 a & 0 & 0 & 0 & a & 0 & 0 & 0 & a \\
 0 & a & 0 & 0 & 0 & 0 & 0 & 0 & 0 \\
 0 & 0 & a & 0 & 0 & 0 & 0 & 0 & 0 \\
 0 & 0 & 0 & a & 0 & 0 & 0 & 0 & 0 \\
 a & 0 & 0 & 0 & a & 0 & 0 & 0 & a \\
 0 & 0 & 0 & 0 & 0 & a & 0 & 0 & 0 \\
 0 & 0 & 0 & 0 & 0 & 0 & \frac{1+a}{2} & 0 & \frac{\sqrt{1-a^2}}{2} \\
 0 & 0 & 0 & 0 & 0 & 0 & 0 & a & 0 \\
 a & 0 & 0 & 0 & a & 0 & \frac{\sqrt{1-a^2}}{2} & 0 & \frac{1+a}{2}
\end{array}
\right)\!.
\end{equation}
These states are not detected by the PPT criterion and are not distillable, but they are nevertheless entangled for any ${0<a<1}$. Consider the mixture of these states with white noise, i.e., ${\rho(q)=q\rho_{\mathrm{PH}}^{a}+(1-q)\openone/9}$, we then ask for the maximal value of $q$ such that $\rho(q)$ remains separable; see the result in Fig.~\ref{fig:sep_PH}. Compared with the result obtained by the algorithm in Ref.~\cite{pra86.032307} (``SEP"), we get a significant improvement. Even though the values by ``SEP" can be improved with an optimized algorithm,\footnote{We sincerely thank H. Kampermann for sharing with us his optimized code in Ref.~\cite{pra86.032307}.} they are still worse than the values by Proposition~1. Moreover, the upper bound for entanglement reported in Ref.~\cite{prl109.200503} (``OBS1") is very close to the values found by Proposition~1.

In Table~\ref{tab:thres}, more examples are presented. Note that this table is extracted from Ref.~\cite{pra86.032307} for comparison. As we can see, most of the threshold values are improved by Proposition~1, but few are not.
Recently in \cite{arXiv1705.01523}, by using machine learning techniques, the threshold for state $\rho_{\mathrm{UPB}}$ is reported to be $0.8649$. However, a huge number of random extreme points within the separable region is needed by the method in Ref.~\cite{arXiv1705.01523}, which is only useful for the particular state tested.

\section{Appendix D: Proof of Theorem~1}\label{sec:thm1}
We first prove Theorem~1 for the DG algorithm, for which the following two statements are needed; see Fig.~\ref{fig:thm1}.

\begin{statement}
There can only be one quantum state $\rho^{*}$, such that ${\cS\!\circ\!\cG}(\rho^{*})=\rho^{*}$ in the DG algorithm.
\end{statement}
\begin{proof}
Consider the case that ${\cG(\rho^{*})\neq\rho^{*}}$ for a fixed point $\rho^{*}$, otherwise it is trivial because we are already at the optimum. Let ${\epsilon=||\cG(\rho^{*})-\rho^{*}}||$, then the ball ${B_{\epsilon}[\cG(\rho^{*})]}$ centered at $\cG(\rho^{*})$ with radius $\epsilon$ contains only one state in $\cC$, namely, $\rho^{*}$.

Assume, instead, there are two fixed points $\rho^{*}$ and $\tilde{\rho}$, with their corresponding balls ${B_{\epsilon}[\cG(\rho^{*})]}$ and ${B_{\delta}[\cG(\tilde{\rho})]}$. Then, for the line $\vec{l}=\tilde{\rho}-\rho^{*}$ connecting $\rho^{*}$ and $\tilde{\rho}$, we have ${\braket{\cG(\rho^{*})-\rho^{*}}{\vec{l}}\le0}$ and ${\braket{\cG(\tilde{\rho})-\tilde{\rho}}{\vec{l}}\ge0}$, otherwise the two balls would contain more states in $\cC$.

Parametrize the line $\vec{l}$ with some parameter $t\in [0,1]$, and look at the function $\cF(\rho)$ on this line with respect to $t$. We then have ${\partial_t\cF(\rho)|_{t=0}\le0}$ and ${\partial_t\cF(\rho)|_{t=1}\ge0}$, which implies that ${\cF(\rho)\equiv\mbox{Const}}$ due to concavity. However, this contradicts the fact that $\cF(\rho)$ is strictly convex. Thus, Statement 1 is true.
\end{proof}

\begin{figure}[]
\includegraphics[width=0.8\columnwidth]{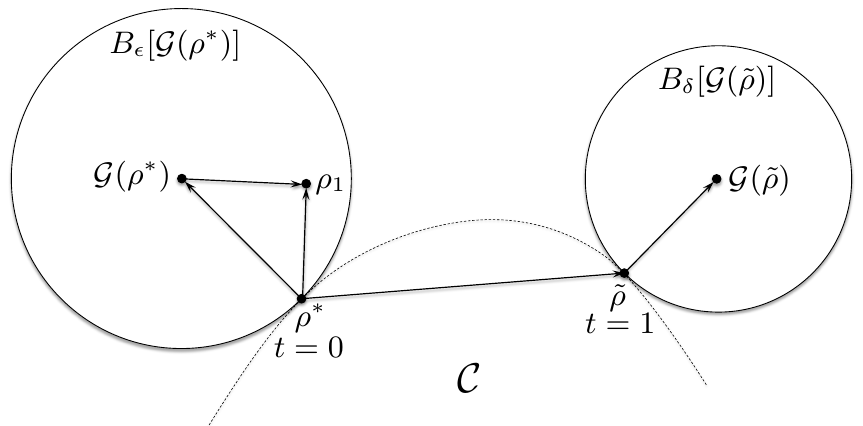}
\caption{\label{fig:thm1}%
For the proof of Theorem 1. See the text in Appendix~D for details.
}
\end{figure}

\begin{statement}
If $\rho^{*}$ maximizes the function $\cF(\rho)$ over $\cC$, then $\rho^{*}$ is a fixed point.
\end{statement}
\begin{proof}
The statement is clear if $\rho^{*}$ is contained inside $\cC$.
So let us assume that the statement is not true and consider
$\rho_1={\cS\!\circ\!\cG}(\rho^{*})$ as the next potential update.
As $\rho^{*}$ lies on the boundary of $\cC$, we have  that all the
states on the line $\vec{l}=\rho_1-\rho^{*}$ belong to $\cC$ because
of convexity.
Moreover, since $\rho_1$ lies within the ball $B_{\epsilon}[\cG(\rho^{*})]$, we have the overlapping ${\braket{\vec{l}}{\cG(\rho^{*})-\rho^{*}}>0}$ which indicates that the function value can still be increased. As a result, there must exist a state $\rho$ on the line $\vec{l}$ such that ${\cF(\rho)>\cF(\rho^{*})}$. This contradicts the assumption that $\rho^{*}$ is the maximum of $\cF(\rho)$. Thus, Statement 2 is true.
\end{proof}

As a consequence of the above two statements, if a fixed point $\rhoML^{\cC}$ is found by the DG algorithm, then this is the solution to the optimization problem in Theorem~1. Thus, Theorem~1 for the DG algorithm is proved. For the APG algorithm, however, the update of our target $\rho$ is based on another state $\sigma$. Thus, Statement~1 has to be modified as the following:
\begin{statement}
There can only be one quantum state $\rho^{*}$, such that ${\cS\!\circ\!\cG[\cS\!\circ\!\cG}(\sigma^{*})]=\rho^{*}$ in the APG algorithm.
\end{statement}
\begin{proof}
By first applying Statement~1, there is only one state $\sigma^{*}$, such that ${\cS\!\circ\!\cG}(\sigma^{*})=\sigma^{*}$. Such a situation in the APG algorithm would trigger the operation `Restart' to reset $\sigma^{*}\equiv\rho^{*}$. Then by using Statement~1 once again, we have only one state $\rho^{*}$, such that ${\cS\!\circ\!\cG[\cS\!\circ\!\cG}(\sigma^{*})]\equiv{\cS\!\circ\!\cG}(\rho^{*})=\rho^{*}$. Thus, Statement~3 is true.
\end{proof}
Therefore, by combing Statements~2 and 3, Theorem~1 for the APG algorithm is also proved.

\section{Appendix E: More simulated examples}\label{sec:examples}
\subsection{Random two-qubit pure state with white noise}
Consider a randomly generated two-qubit pure state $\ket{\phi}$ mixed with white noise,
\begin{equation}
\rho(q)=q\ket{\phi}\bra{\phi}+\frac{1-q}{4}\openone\,.
\end{equation}
In the simulation, we set the noise level ${q=0.9}$, then apply the Pauli scheme. Figure~\ref{fig:2qb} shows the normalized log-likelihood ratios $\lambda/N$ at each iterative step by the two algorithms. As can be seen, the APG algorithm reports better solution than DG does. For comparison, we randomly generated one million two-qubit quantum states, then calculated the minimal $\lambda/N$ value with the help of PPT. We find that this value ($\sim\!0.1305$) is much worse than the values obtained by our algorithms.
\begin{figure}[t]
\includegraphics[width=0.8\columnwidth]{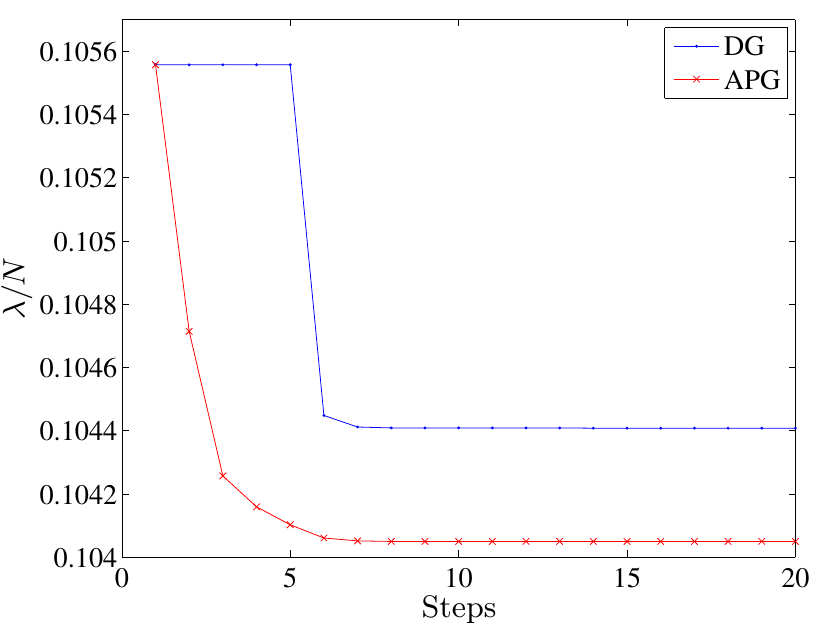}
\caption{\label{fig:2qb}%
The normalized log-likelihood ratios $\lambda/N$ at each iterative step by DG and APG respectively, for the random two-qubit pure state with white noise.
}
\end{figure}
\subsection{Bound entangled state}
In this example, we consider one of the Horodecki ${3\times3}$ bound entangled states $\rho_{\mathrm{PH}}^{a}$ (see Appendix~C and Ref.~\cite{pla232.333}), e.g., $a=0.3$. For each qutrit, we use the symmetric informationally complete (SIC) POVM in $d=3$, thus the overall POVM has nine outcomes. The results are shown in Fig.~\ref{fig:PH}. This example clearly demonstrates that the APG algorithm is capable of resolving the accuracy problem very easily; while the DG algorithm in this case is hard to proceed. 
\begin{figure}[t]
\includegraphics[width=0.8\columnwidth]{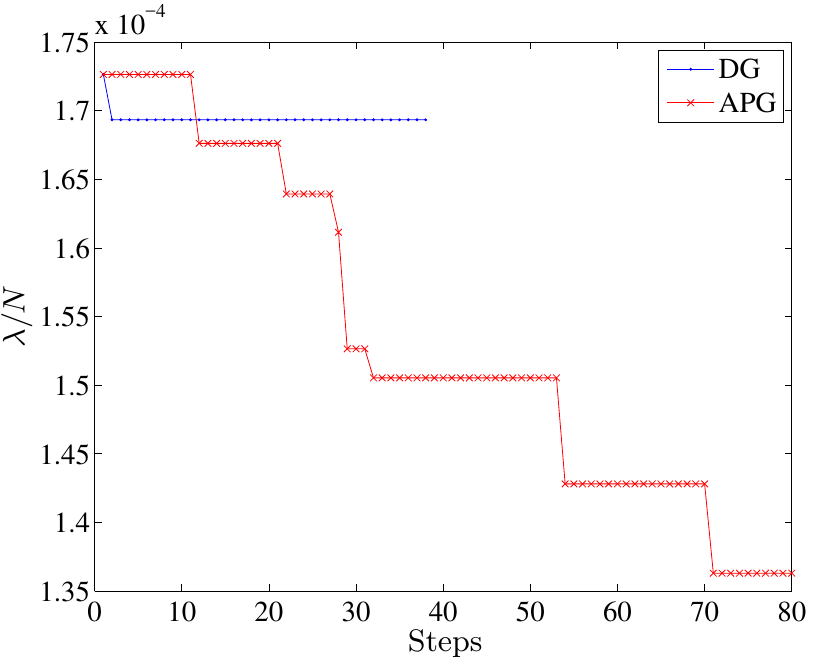}
\caption{\label{fig:PH}%
The normalized log-likelihood ratios $\lambda/N$ at each iterative step by DG and APG respectively, for the Horodecki ${3\times3}$ bound entangled state.
}
\end{figure}
%

\section{Appendix F: Simulated experiments for noisy Smolin states}\label{sec:simu}
Using the noisy Smolin states with various noise levels $q$, we perform simulated tomography experiments. The settings used in the simulation are exactly the same as those in Ref.~\cite{PhotonSmo}, and the number of copies of the true states used is around 4 million.
With the data obtained, we then perform the likelihood-ratio test; see the result shown in Fig.~\ref{fig:simu}.
By doing so, we identify the parameter range ${q\sim[0.35, 0.8]}$, which contains the most-likely candidate states that are expected to show bound entanglement.
For the real experiment in Ref.~\cite{PhotonSmo} that demonstrated bound entanglement, the noise level ${q=0.51}$  is certainly within this range.
As such, our method provides a reliable guidance for the experimentalists to choose the best candidate states for their future experiments.
\begin{figure}[t]
\includegraphics[width=0.8\columnwidth]{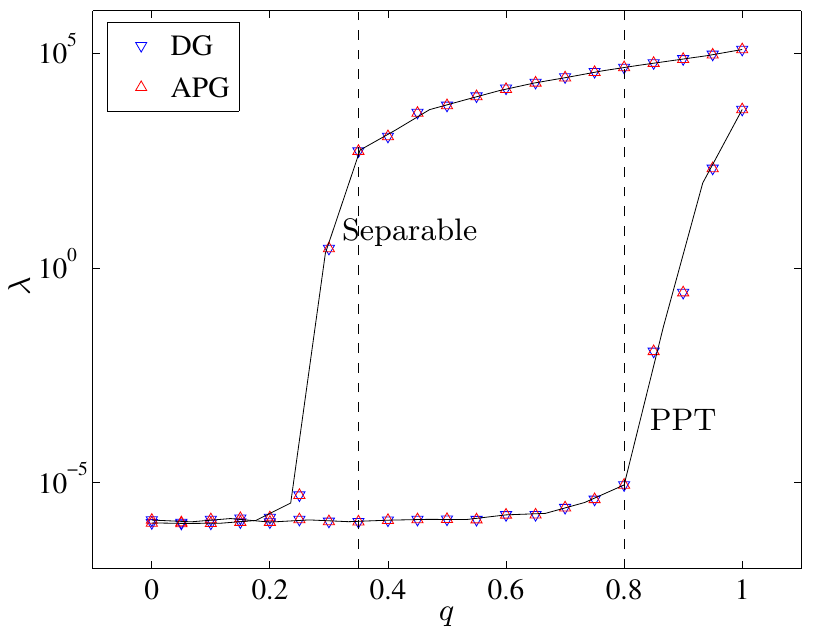}
\caption{\label{fig:simu}%
Likelihood-ratio test for the simulated experiments using the noisy Smolin states $\rho_{\mathrm S}(q)=q\rho_{\mathrm S}+({1-q})\openone/16$.
The parameter range ${q\sim[0.35, 0.8]}$ (dashed vertical lines) encloses the candidate states with very large $\lambda$ values over the separable region but rather small $\lambda$ values over the PPT region, thus the states contained are most likely to be bound entangled.
Note that the fitted black curves passing through the data points are for guidance of the eyes only.
}
\end{figure}
%


\end{document}